\newtheorem{theorem}{Theorem}[section]
\newtheorem{claim}[theorem]{Claim}
\newcommand{\qedsymb}{\hfill{\rule{2mm}{2mm}}}
\newenvironment{proof}[1][]{\begin{trivlist}
\item[\hspace{\labelsep}{\bf\noindent Proof#1:\/}] }{\qedsymb\end{trivlist}}
\def\R{\mathbb{R}}
\newcommand{\inpc}[2]{\langle{#1},{#2}\rangle} 
\newcommand{\Tr}{\mbox{\rm Tr}}
\def\01{\{0,1\}}
\newcommand{\topic}[1]{\paragraph{{#1}.}}
\newcommand\ket[1]{{ |{#1} \rangle }}
\newcommand\bra[1]{{ \langle {#1} | }}
\newcommand{\eps}{\varepsilon}
\renewcommand{\epsilon}{\varepsilon}
\begin{document}

\title{\bf Bell Violations through Independent Bases Games}

\author{
 Oded Regev \footnote{Blavatnik School of Computer Science, Tel Aviv University, and
CNRS, D{\'e}partement d'Informatique, {\'E}cole normale sup{\'e}rieure, Paris.
Supported by the Israel Science Foundation, by the Wolfson Family
  Charitable Trust, and by a European Research Council (ERC) Starting Grant.
   }}


\maketitle

\begin{abstract}
In a recent paper, Junge and Palazuelos presented two two-player games exhibiting
interesting properties. In their first game, entangled players can
perform notably better than classical players. The quantitative gap between the two cases
is remarkably large, especially as a function of the number of inputs to the players.
In their second game, entangled players can perform notably better than
players that are restricted to using a maximally entangled state (of arbitrary dimension).
This was the first game exhibiting such a behavior.
The analysis of both games is heavily based on non-trivial results from Banach space
theory and operator space theory. Here we present two games exhibiting a similar
behavior, but with proofs that are arguably
simpler, using elementary probabilistic techniques and standard quantum information
arguments. Our games also give better quantitative bounds.
\end{abstract}

\section{Introduction}

\topic{Bell violations}
One of the most remarkable predictions of quantum mechanics, dating back to the mid-20th century~\cite{epr,bell:epr},
is the fact that remote parties that share entanglement can exhibit behaviors that cannot be explained classically.
Experimental verifications of these predictions are the strongest proof
we have that nature does not behave according to classical physics.

A recent line of work has focused on a quantitative analysis of this phenomenon
through the framework of two-player games. The setting is as follows.
Two non-communicating parties, called Alice and Bob, receive inputs $x$ and $y$ according
to some fixed and known probability distribution, and are required to produce outputs $a$ and $b$, respectively.
For each pair of inputs $(x,y)$ and pair of outputs $(a,b)$ there is a known probability
of winning the game. The goal of the players is to maximize their bias,
defined as the absolute value of the difference between the winning probability and
$1/2$.\footnote{One can also consider the goal of simply maximizing
the winning probability, a setting corresponding to what one might call \emph{positive Bell functionals}.
Unfortunately, we do not know if our games can be modified to apply
to this stronger setting. See~\cite{junge&palazuelos:largeviolation} for some results
on positive Bell functionals.}

The players can be classical, in which case their strategy is described by a function
mapping each input $x$ of Alice to some output $a$, and similarly a function for Bob.
They can also be entangled, in which case a strategy is described by an arbitrary
entangled state shared between the parties, together with a measurement for each input $x$ of Alice,
and a measurement for each input $y$ of Bob.
On inputs $x,y$, the players apply the measurements corresponding to $x$ and $y$ to the entangled state,
producing classical outputs $a$ and $b$, respectively.

The remarkable phenomenon mentioned above corresponds to the fact that there are
games where entangled players can achieve a strictly larger bias
than what classical players can achieve.
The CHSH game is one particularly famous example~\cite{chsh}.
Here the inputs $x\in\01$ and $y\in\01$ are uniformly
distributed, and Alice and Bob win the game if their respective outputs $a\in\01$ and $b\in\01$
satisfy $a\oplus b=x\wedge y$; in other words, $a$ should equal $b$ unless $x=y=1$.
It is easy to see that classical players can achieve a bias of at most $1/4$.
Entangled players, on the other hand, can achieve a bias of $\cos(\pi/8)^2 - 1/2 \approx 0.35$.
This is achieved already with $2$-dimensional entanglement (namely, one EPR-pair).

The ratio between the best bias achievable by entangled players and that achievable
by classical players is known as the \emph{Bell violation exhibited by the game}.
(See \cite{BuhrmanRSW11} for the origin of this term, as a well as a further discussion of the model.)
This ratio has been the subject of intensive study in recent years. Let us
mention a few results in this line of work. It follows from Grothendieck's
inequality that XOR games (i.e., games where the outputs are taken from the
set $\{0,1\}$ and the winning probability depends solely on the inputs and the XOR of the two
outputs, as is the case in the CHSH game) cannot exhibit a violation greater than some absolute constant~\cite{chtw:nonlocal}.
More generally, Junge and Palazuelos~\cite[Theorem~14]{junge&palazuelos:largeviolation} proved that if
Alice and Bob have at most $k$ possible outputs each, then the maximum violation obtainable
is at most $O(k)$.\footnote{This improved an earlier $O(k^2)$ upper bound due to Degorre et al.~\cite{dklr:nonsignal},
and was also obtained for a special case by Dukaric~\cite[Theorem~4]{dukaric:norm}.}
They also show that if the number of inputs to each player is at most $n$,
the maximum violation is at most $O(n)$. Finally, Junge et al.~\cite{PerezGarcia09arxiv}
showed that the maximum violation obtainable with local entanglement dimension $d$ is at most $O(d)$.

It is interesting to look for games that exhibit a violation
that is as close as possible to these upper bounds.
One of the strongest results in this direction is by
Junge and Palazuelos~\cite{junge&palazuelos:largeviolation} who (improving on the earlier~\cite{PerezGarcia09arxiv})
showed that for all $n$ there exists a game, which we call the JP game, exhibiting a violation of order $\sqrt{n}/\log n$
with $n$ inputs to each player, $n$ possible outputs to each player, and using entanglement
of local dimension $n$. This violation is only quadratically away from the known upper bounds in
all three parameters of the problem, and as such is quite strong.

In another recent result, Buhrman et al.~\cite{BuhrmanRSW11} presented a game,
called the Khot-Vishnoi (KV) game, which exhibits a violation of order $n/(\log n)^2$, with $2^n/n$ inputs,
$n$ outputs, and using entanglement of local dimension $n$.\footnote{They also
present the so-called Hidden Matching game, which we ignore here since it is weaker in
essentially all parameters.}
This violation is nearly optimal in terms of the number of outputs and in terms of the
local dimension, but not in terms of the number of inputs.
One nice feature of this game is that it is fully explicit;
the JP game, in contrast, is randomly constructed and only guaranteed
to exhibit the gap with high probability. Another advantage
of the KV game is that its analysis is relatively simple.
(Yet another advantage is that the violation can be obtained in terms of the
maximum winning probability, as opposed to maximum biases.)

\topic{First result: a direct and slightly improved analysis of the JP game}
The analysis of the JP game in~\cite{junge&palazuelos:largeviolation} (as well as the earlier
analysis in~\cite{PerezGarcia09arxiv}) are based on non-trivial tools from Banach space theory and
operator space theory, and as such are quite difficult to follow for readers
without the appropriate background. Our first contribution, presented in Section~\ref{sec:jungepala}, is an
alternative direct analysis of their game, using nothing more than basic
probabilistic arguments. Due to the more direct
analysis, we also obtain slightly better quantitative parameters, namely,
we reduce the number of inputs from $n$ to $n/\log n$.

\topic{Second result: limitation of maximally entangled states}
It is quite natural to expect that a maximally entangled state is always at least
as good a starting point for an entangled strategy as a ``less entangled'' state in the same dimension.
This is especially natural considering that almost all the entangled strategies one encounters in the literature
use a maximally entangled state. Somewhat surprisingly, this intuition is completely false:
Junge and Palazuelos~\cite{junge&palazuelos:largeviolation} presented a game
in which players using a certain non-maximally entangled state can do strictly better
than players using a maximally entangled state of arbitrarily high dimension.
More precisely, they considered a modification of the JP game in which the number of inputs
is increased to $2^{O(n^2)}$ and the number of outputs remains $n$.
They showed that using a certain non-maximally entangled state of dimension $n$,
entangled players can obtain a bias that is greater by a factor of order $\sqrt{n}/\log n$
than what can be achieved using a maximally entangled state of arbitrarily high dimension.
The analysis is heavily based on results from operator space theory.

In Section~\ref{sec:maxent} we present a game with a similar behavior.
Our analysis is based on a basic quantum information technique, namely,
the quantum random access code argument, and should hopefully be more familiar to
most readers.
Our game also happens to provide better quantitative performance. Namely,
we obtain the same gap of $\sqrt{n}/\log n$ using only $2^n/n$ inputs
(and still using $n$ outputs and $n$-dimensional entanglement).
Finally, unlike the JP game, our game is explicit.

We note that recently two other explicit games were presented
where the maximally entangled state is not optimal, one by
Liang, Vertesi, and Brunner~\cite{LiangVB10}, and one by
Vidick and Wehner~\cite{VidickW10}.
These games have the advantage of being very small (namely,
just two inputs and two outputs in the former game and three inputs
and two outputs in the latter game) but as a result the gap they
obtain is also quite small (and fixed, unlike our gap which grows to infinity with
the size of the game).

\topic{Discussion}
Although our second game differs from the JP game in its technical details,
the idea underlying both games is the same. Namely, in both games,
the inputs to Alice and Bob are (essentially) orthonormal bases of $\R^n$,
chosen independently of each other; each player outputs one vector from the input basis
with the goal being to pick vectors that are as close as possible to each other.
Given this strong similarity, it is natural to ask whether the JP game,
with its significantly smaller number of inputs, is already enough
to demonstrate the limitation of maximally entangled states.
Another interesting question is how good maximally entangled states are when one is interested
in maximizing the winning probability, as opposed to the bias. One result in this direction~\cite[Theorem 10]{junge&palazuelos:largeviolation}
shows that one can replace an arbitrary entangled state of dimension $n$ by a maximally entangled state,
reducing the winning probability by a factor of at most $O(\log n)$; it would be nice to obtain a similar statement
with a loss that is independent of the dimension.

\section{A Bell Inequality Violation}\label{sec:jungepala}

In this section we give a self-contained description and analysis of the Junge-Palazuelos game~\cite{junge&palazuelos:largeviolation}.
The main differences compared to~\cite{junge&palazuelos:largeviolation} are the following.
First and most importantly, instead of using Banach space theory to bound the classical winning probability of the game
we use elementary probabilistic arguments based on a union bound and a Chernoff-like tail bound.
(We do not know if the proof in~\cite{junge&palazuelos:largeviolation},
when unwrapped from its Banach space language, boils down to a similar probabilistic proof.) Our analysis
of the entangled strategy is similar to the original analysis in~\cite{junge&palazuelos:largeviolation}.
Second, to simplify the discussion a bit, we choose
to work with Gaussian variables instead of Bernoulli variables, a possibility already
mentioned in~\cite{junge&palazuelos:largeviolation}. Finally, we also choose to describe things in terms
of games, as opposed to Bell functionals (see~\cite{BuhrmanRSW11}), in order to work in a slightly more familiar setting.

We start with an informal description of the game. The game is parameterized by two integers $n$ and $k$.
The number of inputs to each player is $n$ and the number of outputs of each player is $k$.
For each input $x \in [n]$ to Alice we associate a uniformly chosen orthonormal basis of
$\R^k$, $u^{x}_1,\ldots,u^{x}_k$.
(For technical reasons, in the actual game these vectors won't
be quite orthogonal, but only mostly so.) Similarly, for each input $y \in [n]$ to Bob we associate
a uniformly chosen orthonormal basis of $\R^k$, $v^{y}_1,\ldots,v^{y}_k$. Recall that
the inner product between two uniformly distributed $k$-dimensional unit vectors behaves roughly like a normal distribution
with standard deviation $1/\sqrt{k}$. The game is the following. The players receive inputs $x,y$
chosen uniformly, and are supposed to return outputs $a,b$ for which $\inpc{u^{x}_a}{v^{y}_b}$
is positive. More precisely, we will set the winning probability of outputs $a,b$ to be
$\frac{1}{2}+\delta \inpc{u^{x}_a}{v^{y}_b}$, where the normalizing factor
$\delta \approx \sqrt{k}$ is chosen so that this value is between $0$ and~$1$.
(We note that for convenience, in the actual game we use a different normalization.)

An apparently good entangled strategy for the game follows almost immediately from the definition.
Using the maximally entangled state, Alice performs a measurement in the basis corresponding to her input $x$, and similarly
Bob performs a measurement in the basis corresponding to his input $y$.
It is not difficult to see that this makes the probability of
obtaining outputs $a,b$ proportional to the square of the inner product between them.
But this is not helpful: recall that our goal is to output pairs where the inner product itself is large,
and not its absolute value.
This can be solved by slightly modifying the quantum measurements and, somewhat
surprisingly, seems to require replacing the maximally entangled state (as used
in almost all other known entangled strategies) with a certain non-maximally entangled state.

It is interesting to note that there are strong similarities between the JP game and the KV game of~\cite{BuhrmanRSW11}.
For instance, both games are designed so that each input is naturally associated with
an orthonormal basis, which in turn leads to an entangled strategy. Roughly speaking, the main difference
between the games is that in the KV game, input pairs correspond to bases that are somewhat aligned and the outputs
are supposed to agree with the matching between the bases (so there are $k$ valid output pairs), whereas here input
pairs correspond to two independent random bases and the outputs are supposed to be on somewhat close
vectors (so roughly speaking, there are $k^2/2$ valid output pairs).

The main part of the proof consists of upper bounding the bias classical players can obtain. Here~\cite{junge&palazuelos:largeviolation}
use results from Banach space theory. We instead perform a relatively simple union bound over all
$k^{2n}$ classical strategies.

We now proceed with the formal description. We will need the following simple claims
on the Gaussian distribution.

\begin{claim}\label{clm:largedevgaussian}
For all $k \ge 1$ and $R \ge 2$, the probability that a standard $k$-dimensional Gaussian
variable has norm greater than $\sqrt{8k \ln R}$ is at most~$R^{-k}$.
\end{claim}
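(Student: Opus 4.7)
The plan is to reduce the claim to a standard Chernoff bound on the chi-squared distribution. If $X=(X_1,\ldots,X_k)$ is a standard $k$-dimensional Gaussian, then $\|X\|^2 = \sum_{i=1}^k X_i^2$ is chi-squared with $k$ degrees of freedom, and the statement we want to prove is equivalent to $\Pr[\|X\|^2 > 8k\ln R] \le R^{-k}$.

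I would use the moment generating function. For any $\lambda \in (0,1/2)$, a direct Gaussian integral gives $\mathbb{E}[e^{\lambda X_i^2}] = (1-2\lambda)^{-1/2}$, so by independence $\mathbb{E}[e^{\lambda\|X\|^2}] = (1-2\lambda)^{-k/2}$. Markov's inequality then yields, for every $t>0$,
\[
\Pr[\|X\|^2 > t] \;\le\; e^{-\lambda t}(1-2\lambda)^{-k/2}.
\]
Rather than optimize $\lambda$ in $t$, I would simply take $\lambda=1/4$, which is a convenient constant bounded away from $1/2$: then $(1-2\lambda)^{-k/2} = 2^{k/2}$ and $e^{-\lambda t} = e^{-t/4}$.

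Substituting $t = 8k\ln R$ gives the bound $e^{-2k\ln R}\cdot 2^{k/2} = R^{-2k} \cdot 2^{k/2}$. To conclude it suffices to check that $2^{k/2} R^{-2k} \le R^{-k}$, i.e.\ $\sqrt{2} \le R$, which holds by the hypothesis $R\ge 2$. I do not expect any serious obstacle here: the only mild choice is picking a value of $\lambda$ that is simultaneously (i) bounded below $1/2$ so the MGF is finite, and (ii) large enough that, combined with the factor $\ln R\ge\ln 2$, the $e^{-\lambda t}$ term beats the $(1-2\lambda)^{-k/2}$ term by a factor of at least $R^k$. The choice $\lambda=1/4$ works cleanly; any fixed constant in $(0,1/2)$ would do up to adjusting the absolute constant $8$ in the statement.
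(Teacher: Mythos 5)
Your proof is correct: the MGF computation, the Markov step, and the final numerical check ($2^{k/2}R^{-2k}\le R^{-k}$ iff $R\ge\sqrt 2$) are all right, and the claim follows. This is essentially the same exponential-moment idea as the paper's proof, just packaged differently: the paper avoids computing any MGF by comparing the standard Gaussian density pointwise to that of a Gaussian with standard deviation $2$ (whose normalizing constant $2^{-k}$ plays exactly the role of your $(1-2\lambda)^{-k/2}$, with an implicit tilting parameter $3/8$ in place of your $1/4$) and noting that the wider Gaussian has total mass at most $1$; your version is the textbook Chernoff bound for the chi-squared distribution and is arguably more routine to verify, while the paper's is slightly slicker in that it requires no integral evaluation at all.
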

\begin{proof}
Consider the density function of the standard $k$-dimensional Gaussian, $(2
\pi)^{-k/2} e^{- \|x\|^2 / 2}$ as well as that of a Gaussian with standard deviation $2$
in each coordinate, $2^{-k} (2 \pi)^{-k/2} e^{- \|x\|^2 / 8}$. The ratio between these two densities is
$2^{-k} e^{3\|x\|^2/8}$. In particular, for any point of norm greater than $\sqrt{8 k
\ln R}$, the latter density is greater than the former by a factor of at least $R^k$.
The claim follows from the fact that the probability of obtaining a point of norm greater
than $R$ in the latter distribution is clearly at most 1.
\end{proof}

\begin{claim}\label{clm:gaussinp}
For all $k \ge 1$ and $\eps>0$,  the probability that the inner product of two standard $k$-dimensional Gaussians
is bigger than $\sqrt{16 \ln(4/\eps) \max(k,\ln(2/\eps))}$ in absolute value is at most~$\eps$.
\end{claim}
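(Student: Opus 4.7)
The plan is to condition on one of the two Gaussians, reducing the problem to a one-dimensional tail bound plus control of the other's norm. Let $X,Y$ denote the two independent standard $k$-dimensional Gaussians; conditional on $Y$, the scalar $\inpc{X}{Y}$ is distributed as $\mathcal{N}(0,\|Y\|^2)$, so its tail depends only on $\|Y\|$, which is exactly what Claim~\ref{clm:largedevgaussian} lets me control.

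First I would bound $\|Y\|$ using Claim~\ref{clm:largedevgaussian} with $R=\max(2,(2/\eps)^{1/k})$, chosen so that $R^{-k}\le \eps/2$. In the regime $\ln(2/\eps)\ge k\ln 2$ this gives $R=(2/\eps)^{1/k}$, so $R^{-k}=\eps/2$ and $8k\ln R=8\ln(2/\eps)$. In the opposite regime it gives $R=2$, with $R^{-k}=2^{-k}\le \eps/2$ by the case hypothesis and $8k\ln R=8k\ln 2\le 8k$. Either way, with probability at least $1-\eps/2$,
\[
\|Y\|^2 \;\le\; 8\max\bigl(k,\ln(2/\eps)\bigr).
\]

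Next, conditional on any fixed $Y$, I would apply the standard scalar tail estimate $\Pr[|Z|>t]\le 2e^{-t^2/2}$ for $Z\sim \mathcal{N}(0,1)$ to $\inpc{X}{Y}/\|Y\|$, with the choice $t=\sqrt{2\ln(4/\eps)}$. This bounds the conditional probability that $|\inpc{X}{Y}|>\sqrt{2\ln(4/\eps)}\,\|Y\|$ by $2\cdot(\eps/4)=\eps/2$, so its unconditional probability is at most $\eps/2$ as well. A union bound with the norm event then yields
\[
|\inpc{X}{Y}|^2 \;\le\; 2\ln(4/\eps)\cdot 8\max\bigl(k,\ln(2/\eps)\bigr) \;=\; 16\ln(4/\eps)\max\bigl(k,\ln(2/\eps)\bigr)
\]
except with probability at most $\eps$, which is the claimed bound.

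The only mildly delicate point is the case split in the choice of $R$: it is precisely what forces the $\max(k,\ln(2/\eps))$ factor rather than a pure $k$ or pure $\ln(2/\eps)$ in the norm bound. Once that is in place, the rest is a routine combination of two Gaussian tail bounds and a union bound, so I do not expect a real obstacle.
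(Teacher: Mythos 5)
Your proposal is correct and follows essentially the same route as the paper: reduce $\inpc{X}{Y}$ to a one-dimensional normal scaled by $\|Y\|$, bound each factor with probability $\eps/2$ (the norm via Claim~\ref{clm:largedevgaussian}, the scalar via the standard tail bound), and combine with a union bound. Your explicit case split on the choice of $R$ is a detail the paper leaves implicit, but the argument is the same.
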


\begin{proof}
By rotational symmetry of the Gaussian distribution, we may assume one of the two vectors (call it $Y$)
is a scalar multiple of $e_1$, where that scalar is the vector's norm $\|Y\|$.
Hence the distribution we are considering is equal to that of $X \|Y\|$
where $X$ is a standard (one-dimensional) normal variable and $Y$ is an independent standard
$k$-dimensional Gaussian vector. By standard estimates, the probability that $X$ is greater
than $\sqrt{2\ln(4/\eps)}$ in absolute value is at most $\eps/2$.
By Claim~\ref{clm:largedevgaussian}, the probability that $\|Y\|$ is greater
than $\sqrt{8k \max(1,\ln(2/\eps) / k) }$ is also at most $\eps/2$. The claim follows.
\end{proof}

\medskip

\topic{The game}
We continue with a formal description of the game.
The game is parameterized by two integers $k$ and $n$, and we assume
for simplicity that $\sqrt{k} \le n \le 2^k$ (this can easily be extended to other settings).
As mentioned above, the number of inputs to each player is $n$;
for technical reasons, however, we add a $k+1$st output, which we think of as a ``pass'' output,
so the number of outputs of each player is $k+1$.
For each input $x \in [n]$ to Alice we independently choose $k$ standard Gaussian vectors
in $\R^k$, $u^{x}_1,\ldots,u^{x}_k$. We similarly define vectors
$v^{y}_1,\ldots,v^{y}_k$ for each of Bob's inputs.
The players receive inputs $x,y$ chosen uniformly. If any of the players outputs
$k+1$, they win with probability exactly $1/2$.
Otherwise, the probability that outputs $a,b\in[k]$ win (on input pair $x,y$)
is set to $\frac{1}{2}+\delta \inpc{u^{x}_a}{v^{y}_b}$, where $\delta := c/\sqrt{k \log n}$
for some constant $c>0$.
It follows from Claim~\ref{clm:gaussinp} and a union bound that if we choose $c>0$ small enough,
then with probability very close to $1$ (say, $1-n^{-10}$),
all these $n^2 k^2$ numbers are between $0$ and $1$, and
hence the game is well defined.\footnote{If one is content with a
Bell functional then this step can be avoided, and there is no need for $\delta$:
just define the functional as $M^{ab}_{xy} = \inpc{u^{x}_a}{v^{y}_b}$ or $0$ if
either $a$ or $b$ are $k+1$. The rest of the analysis is essentially identical.}
In the following we will show that the
classical winning probability of the game can differ from $1/2$ by at most
$O(\delta \max(\sqrt{(k\log k)/{n}}, \log k))$,\footnote{We note that
this bound is essentially tight in the interesting regime of $n \ge k/\log k$.
Just have Alice and Bob choose the vector whose first coordinate is largest.
Since among $k$ independent standard normal variables there is with high probability
at least one that is $\Omega(\sqrt{\log k})$, the inner product between Alice's
and Bob's vector has expectation of order $\log k$, leading to a winning probability
of $\frac{1}{2}+\Omega(\delta \log k)$.}
and that a strategy using entanglement of local dimension $k+1$ can win the game with probability at least
$\frac{1}{2}+ \Omega(\delta \sqrt{k})$. As a result we obtain a Bell violation
of $\Omega(\sqrt{k}/\log k)$ with $k+1$ possible outputs, local dimension $k+1$,
and a number of inputs $n$ that can be taken to be as small as $k/\log k$.
We note that this slightly improves on the analysis in~\cite{junge&palazuelos:largeviolation}
where the number of inputs was taken to be $n=k$,
leading to a violation of $\Omega(\sqrt{n}/\log n)$ compared with our
$\Omega(\sqrt{n/\log n})$.

\medskip

\topic{Classical bias}
As mentioned above, we bound the classical bias by performing a union
bound over all $(k+1)^{2n}$ classical strategies. Fix some arbitrary strategy given by functions $a,b:[n]\rightarrow[k+1]$.
The winning probability of this strategy is
$$  \frac{1}{2} + \frac{\delta}{n^2} \sum_{x,y} \inpc{u^{x}_{a(x)}}{v^{y}_{b(y)}} =
     \frac{1}{2} + \frac{\delta}{n^2}  \left\langle \sum_x u^{x}_{a(x)},\sum_y v^{y}_{b(y)} \right\rangle,$$
where the sums run over $x$ such that $a(x) \neq k+1$ and $y$ such that $b(y) \neq k+1$.
Notice now that $\sum_x u^{x}_{a(x)}$ and $\sum_y v^{y}_{b(y)}$ are two independent $k$-dimensional Gaussians
with standard deviation at most $\sqrt{n}$ in each coordinate (since they are sums of at most $n$ independent standard Gaussians).
Hence, by Claim~\ref{clm:gaussinp}, we know that the probability that the above inner product is greater than
$$ (\sqrt{n})^2 \cdot C \sqrt{ n \log k \max(k,n \log k)} = C n^2 \max\left(\sqrt{\frac{k\log k}{n}}, \log k\right)$$
in absolute value is much less than $(k+1)^{-2n}$,
where $C>0$ is a large enough constant.
By the union bound we conclude that the above-mentioned bound $1/2 \pm O(\delta \max(\sqrt{(k\log k)/{n}}, \log k))$
on the classical winning probability indeed holds with high probability
(probability taken over the choice of the vectors $u^{x}_1,\ldots,u^{x}_k$ and $v^{y}_1,\ldots,v^{y}_k$).

\medskip

\topic{Entangled strategy}
We use the following (non-maximally) entangled state with local dimension $k+1$,
\begin{equation}\label{eq:nonmaxentjp}
\ket{\psi} := \frac{1}{\sqrt{2k}} \sum_{i=1}^{k} \ket{i,i} + \frac{1}{\sqrt{2}} \ket{k+1,k+1} .
\end{equation}
Define vectors $\tilde{u}^{x}_a \in \R^{k+1}$ by taking ${u}^{x}_a$ and adding as the $k+1$st coordinate the value
$1$. For each input $x$ to Alice we create the POVM $\{A^{x}_a\}_a$ given by
$$ A^{x}_a := \frac{1}{10k} \ket{ \tilde{u}^{x}_a } \bra{ \tilde{u}^{x}_a },$$
and $A^{x}_{k+1} = I - \sum_{a=1}^k A^{x}_a$. We similarly define POVM measurements for Bob,
$\{B^{y}_b\}_b$.
We claim that with high probability the largest eigenvalue of $\sum_{a=1}^k A^{x}_a$ is at most $1$,
and hence these are really well-defined POVMs. This holds,
roughly speaking, because the vectors $\tilde{u}^{x}_a$ are more or less orthogonal
and of norm $\approx \sqrt{k}$, and the normalization by $10k$ is more than sufficient.
More precisely, let $Z$ be a $(k+1) \times k$ matrix whose columns are
given by $\tilde{u}^{x}_1,\ldots,\tilde{u}^{x}_k$; then our goal is
to show that the largest eigenvalue of $Z Z^t$ is at most $10k$ with high
probability. For this, one needs to recall that the largest singular value of a
$k \times k$ matrix whose entries are i.i.d.\ standard Gaussian is tightly concentrated
around $2 \sqrt{k}$ (see, e.g.,~\cite[Corollary 35]{VershyninNotes}); by (say) the triangle inequality for the operator norm, the addition of an extra row of ones
in $Z$ can increase this by at most $\sqrt{k}$, so in particular the probability
that the largest singular value of $Z$ is greater than $\sqrt{10 k } > 3 \sqrt{k}$
is very small. This implies that with high probability the largest eigenvalue
of $Z Z^t$ is at most $10k$, as required.

It remains to calculate the winning probability of this strategy. Given inputs $x,y$, the probability
of the players producing outputs $a,b\in[k]$ is
\begin{align*}
\bra{\psi} A^{x}_a \otimes B^{y}_b \ket{\psi} &=
  \frac{1}{100k^2} \left( \frac{1}{\sqrt{2k}} \sum_{i=1}^k (\tilde{u}^{x}_a)_i (\tilde{v}^{y}_b)_i + \frac{1}{\sqrt{2}}
  (\tilde{u}^{x}_a)_{k+1} (\tilde{v}^{y}_b)_{k+1} \right)^2 \\
  &= \frac{1}{100k^2} \left( \frac{1}{\sqrt{2k}} \inpc{u^{x}_a}{v^{y}_b} + \frac{1}{\sqrt{2}} \right)^2.
\end{align*}
The winning probability is therefore
\begin{align*}
 \frac{1}{2}+ \frac{\delta}{100k^2 n^2} \sum_{x,y=1}^n \sum_{a,b=1}^k
    \inpc{u^{x}_a}{v^{y}_b} \left( \frac{1}{\sqrt{2k}} \inpc{u^{x}_a}{v^{y}_b} + \frac{1}{\sqrt{2}} \right)^2.
\end{align*}
Since $\inpc{u^{x}_a}{v^{y}_b}$ is distributed roughly like a normal variable with standard deviation
$\sqrt{k}$, the expectation of the expression inside the sum should behave like $\Theta(\sqrt{k})$.
Indeed, by expanding the inner products one easily sees that the expectation is exactly $\sqrt{k}$.
Hence the expectation of the winning probability is $\frac{1}{2}+\Theta(\delta \sqrt{k})$.
Furthermore, we claim that the winning probability is close to this value with high probability,
and not just in expectation.
One straightforward (yet tedious) way to see this is to compute the standard deviation of the above expression,
which turns out to be $O(\delta (nk)^{-1/2})$,
and then use Chebyshev's inequality.\footnote{In fact,
since the winning probability is a low-degree polynomial in normal variables,
concentration results for Gaussian polynomials show that
this probability is exponentially to $1$ (see, e.g.,~\cite{Latala06} and references therein).}

\smallskip
Finally, by a union bound, we obtain that there is a nonzero (and in fact high) probability
that a game chosen as above satisfies all three requirements simultaneously (namely,
it is well-defined, the bias achievable by classical players is low, and the bias achievable
by entangled players is high), and we are done.

\section{Maximally Entangled States are not Always the Best}\label{sec:maxent}

\topic{The game}
The game we consider can be seen as a natural hybrid between the Junge-Palazuelos game
and the KV game of~\cite{BuhrmanRSW11}. Namely, the inputs and predicates are essentially those in
the Junge-Palazuelos game, except we use the explicit hypercube structure as in the
KV game. Details follow. The game is parameterized by $n$, which is a power of two.
We consider the group $\{0,1\}^n$ of all $n$-bit strings with bitwise addition modulo 2, and let
$H$ be the subgroup containing the $n$ Hadamard codewords. This subgroup partitions
$\{0,1\}^n$ into $2^n/n$ cosets of $n$ elements each. The players receive two independent
and uniform cosets of $H$, call them $H_A$ and $H_B$. Alice outputs an element
$a \in H_A$ and similarly Bob outputs an element $b \in H_B$. Given outputs $a,b$, we
define their probability of winning the game as
$$1 - \frac{1}{n} d(a,b),$$
where $d(a,b)$ denotes the Hamming distance between $a$ and $b$.
Note that the number of possible inputs to each player is $2^n/n$
and that the number of outputs is $n$.

The similarity with the JP game can be seen by thinking of elements of the hypercube
as unit vectors in $\{-1/\sqrt{n},1/\sqrt{n}\}^n$ instead of in $\{0,1\}^n$. Then
each coset of $H$ defines an orthonormal basis, and the winning probability
above is (up to normalization) the same as in the JP game.

\medskip

\topic{Entangled strategy}
We first claim that entangled players can obtain a winning probability of $1/2+\Omega(1/\sqrt{n})$
using an entangled state of local dimension $n+1$. We omit the proof since the strategy that obtains this is
essentially identical to the one used in the JP game (see Section~\ref{sec:jungepala}). The only
thing to notice is that a ``pass" output is not needed; instead, a player can output a uniform
output, thereby guaranteeing a winning probability that is within $\pm 1/n$ of $1/2$ no matter what the other
player outputs, which is essentially as good as a pass output.

\medskip

\topic{Upper bound on the bias achievable with maximally entangled states}
Recall that the entangled strategy above crucially relies on a non-maximally entangled state (Eq.~\eqref{eq:nonmaxentjp}).
In the following we will show that the winning probability of any strategy using a maximally entangled state
of arbitrarily high dimension cannot differ from $1/2$ by more than $O(\log(n)/n)$
(this of course also applies to classical strategies).
Hence we obtain a gap of $\sqrt{n}/\log n$ between entangled strategies and
entangled strategies restricted to using a maximally entangled state.
As mentioned above, the same gap was shown in~\cite{junge&palazuelos:largeviolation},
but using a non-explicit construction and with a much larger number of inputs, namely,
$2^{n^2}$. Moreover, their proof is heavily based on operator space theory whereas ours
only uses basic notions in quantum information theory. It is interesting to point out that already classical strategies can obtain
a winning probability of $1/2 + \Omega(\log(n)/n)$ hence
entangled strategies using the maximally entangled state do not offer any significant advantage
over classical strategies. (One classical strategy that achieves this is the one where Alice and Bob output
the element of their coset of largest Hamming weight.)

Consider an arbitrary entangled strategy using a maximally
entangled state. First, we assume that for any input $H_A$ to Alice, her output
distribution is uniformly distributed over $H_A$, and similarly for Bob. This assumption
is without loss of generality since given an arbitrary strategy, we can have Alice and Bob
use shared randomness (which can be obtained from a shared maximally entangled state)
to pick a uniform element $r \in H$ and then add it to whatever they were about to output.
This clearly does not affect their winning probability and provides the uniform marginal
property.

Assume the strategy uses a maximally entangled state of an arbitrarily large local dimension $D$,
$\ket{\psi} = D^{-1/2} \sum_{i=1}^D \ket{ii}$.
Recall that such a strategy
is described by a POVM $\{E_a\}_{a \in H_A}$ for each input $H_A$ to Alice,
and similarly $\{F_b\}_{b \in H_B}$ for Bob. Here, $E_a$ are positive semidefinite
$D \times D$ matrices satisfying $\sum_{a \in H_A} E_a = I$ for all $H_A$, and similarly
for $F_b$. The probability of obtaining outputs $a,b$ on inputs $H_A,H_B$ is
$$ \bra{\psi} E_a \otimes F_b \ket{\psi} = \frac{1}{D} \, \Tr(E_a F_b).$$
Hence the winning probability is
\begin{align*}
&\sum_{a,b \in \{0,1\}^n} \left(\frac{n}{2^n}\right)^2 \frac{1}{D} \, \Tr(E_a F_b) \cdot (1-d(a,b)/n) = \\
&\qquad \frac{1}{2} + \frac{1}{2n} \sum_{a,b \in \{0,1\}^n} \left(\frac{n}{2^n}\right)^2 \frac{1}{D} \, \Tr(E_a F_b) \cdot (n-2d(a,b)).
\end{align*}
Notice also that by assumption on the uniform marginals, $\Tr(E_a) = \Tr(F_b) = D/n$ for all $a,b$.

For $1 \le i \le n$ define
$$ \hat{E}_i := \frac{n}{2^n} \left( \sum_{a|a_i=0} E_a - \sum_{a|a_i=1} E_a \right),$$
and define $\hat{F}_i$ similarly. Then it is not difficult to see that the winning probability
can be written as
$$ \frac{1}{2} + \frac{1}{2n} \sum_{i=1}^n \frac{1}{D}\, \Tr(\hat{E}_i \hat{F}_i).$$
(We note that $\hat{E}_i$ is up to normalization the Fourier coefficient at location $\{i\}$ of the matrix-valued
function $a \mapsto E_a$ and that this expression for the winning probability arises
naturally when one expresses the winning probability in terms of the Fourier transform.)
Now, by Cauchy-Schwarz, we see that the distance from $1/2$ of the winning probability is at most
$$ \frac{1}{2n} \left( \sum_{i=1}^n \frac{1}{D} \, \Tr(\hat{E}_i^2)\right)^{1/2}
    \left( \sum_{i=1}^n \frac{1}{D} \, \Tr(\hat{F}_i^2)\right)^{1/2}.$$
In the following we will show that for any $E_a$ as above,
$$ \sum_{i=1}^n \frac{1}{D} \, \Tr(\hat{E}_i^2) \le O(\log n).$$
Since the same bound also holds for $\hat{F}_i$, we will be done.

This bound can be interpreted as a bound on the weight in the first Fourier level of
the matrix-valued function $a \mapsto E_a$ and as such one can try to apply
the hypercontractive inequality or the
matrix-valued version thereof, as in~\cite{brw:hypercontractive}.
While this provides the desired bound for classical strategies, it seems
to be not strong enough for entangled strategies. Instead,
we use an argument based on the von Neumann entropy, similar to the one
used by Nayak~\cite{nayak:qfa} in the context of so-called quantum random access codes.

For $a \in \{0,1\}^n$ define $\sigma_a$ to be the quantum state on a $D$-dimensional
system whose density matrix is $\frac{n}{D} E_a$. Notice that the latter is positive
semidefinite and of trace $1$, as required. Notice also that its highest eigenvalue
is at most $n/D$ and that the uniform mixture of all $\sigma_a$ is the completely
mixed state, i.e., $2^{-n} \sum_{a \in \{0,1\}^n} \sigma_a = I/D$.
We also define
$$ \hat{\sigma}_i = \frac{1}{2^n} \left( \sum_{a|a_i=0} \sigma_a - \sum_{a|a_i=1} \sigma_a \right),$$
and notice that its eigenvalues are between $-1/D$ and $1/D$. Our goal can be equivalently
written as
$$ \sum_{i=1}^n D \, \Tr(\hat{\sigma}_i^2) \le O(\log n).$$

Let $A$ and $M$ be two quantum systems of dimensions $2^n$ and $D$, respectively, in the
joint state
$$\frac{1}{2^n} \sum_{a \in \01^n} \ket{a}\bra{a} \otimes \sigma_a.$$
Note that the reduced state of $M$ is completely mixed, and hence its von Neumann
entropy is $S(M)= \log D$. The conditional von Neumann entropy $S(M|A)$, which
is simply the average entropy of $\sigma_a$, is at least $\log(D/n)$ by the bound
on the highest eigenvalue of $\sigma_a$. Hence, the mutual information of $A$ and $M$
satisfies
$$ S(A:M) = S(M) - S(M|A) \le \log n.$$
On the other hand,
$$ S(A:M) = \sum_{i=1}^n S(A_i : M | A_1\ldots A_{i-1}) \ge \sum_{i=1}^n S(A_i : M).$$
The equality is a chain rule and follows immediately from the definitions. The inequality
follows immediately from the strong subadditivity of the von Neumann entropy (which is a deep statement),
and the fact that $A_1,\ldots,A_n$ are independent. Now,
$$ S(A_i:M) = S(M) - S(M|A_i) = \log D - \frac{1}{2} (S(M|A_i=0) + S(M|A_i=1)).$$
Let $-\frac{1}{D} \le \lambda_1,\ldots,\lambda_D \le \frac{1}{D}$ be the eigenvalues of $\hat{\sigma}_i$.
Then since the density matrix of $M|A_i=0$ is $I/D + \hat{\sigma}_i$
and that of $M|A_i=1$ is $I/D - \hat{\sigma}_i$, the above is
\begin{align*}
&\log D - \frac{1}{2} \left(  - \sum_{j=1}^D \left(\frac{1}{D}+\lambda_j \right) \log \left(\frac{1}{D}+\lambda_j \right)
  - \sum_{j=1}^D \left(\frac{1}{D}-\lambda_j \right) \log \left(\frac{1}{D}-\lambda_j \right)  \right) =\\
&\quad \frac{1}{D} \sum_{j=1}^D \left( \log D + \left(\frac{1}{2}+\frac{D}{2}\lambda_j \right) \log \left(\frac{1}{D}+\lambda_j \right)
  + \left(\frac{1}{2}-\frac{D}{2} \lambda_j \right) \log \left(\frac{1}{D}-\lambda_j \right) \right)  =\\
& \quad \frac{1}{D} \sum_{j=1}^D \left( 1-H \left( \frac{1}{2} + \frac{D}{2} \lambda_j \right)\right),
\end{align*}
where $H$ is the binary entropy function, $H(p)=-p \log p- (1-p) \log (1-p)$.
Using the inequality $1-H(p) \ge \frac{2}{\ln 2} (p-\frac{1}{2})^2$, we have
$$ S(A_i:M) \ge \frac{2}{\ln 2} \, \frac{1}{D} \, \sum_{j=1}^D \left( \frac{D}{2} \lambda_j \right)^2 =
  \frac{1}{2 \ln 2} \, D \, \Tr(\hat{\sigma}_i^2).
$$
So by summing over all $i$ we obtain
$$ \sum_{i=1}^n D \, \Tr(\hat{\sigma}_i^2) \le 2 \ln 2 \cdot \log n,$$
as required.

\topic{Acknowledgments}
I thank Jop Bri\"et, Yeong-Cherng Liang, Carlos Palazuelos, and Thomas Vidick for their useful comments,
and Ronald de Wolf for extensive discussions throughout this project.

\newcommand{\etalchar}[1]{$^{#1}$}

\end{document}